\pgfplotsset{compat=1.17} 
\begin{document}
\title{\LARGE \bf Stochastic Model Predictive Control\\ using Initial State and Variance Interpolation}
\author{Henning Schl\"{u}{}ter and Frank Allg\"{o}{}wer
\thanks{
Manuscript received 31 March 2023;
revised XX XXXX 2023;
accepted XX XXXX 2023.
Date of current version 31 March 2023. 
The authors thank the German Research Foundation (DFG) for financial support under the Grant GRK 2198/1-277536708, and Grant AL 316/12-2-279734922;
and the International Max Planck Research School for Intelligent Systems (IMPRS-IS) for supporting Henning Schl\"{u}{}ter.
(Corresponding author:
Henning Schl\"{u}{}ter.)}
\thanks{The authors are with Institute for Systems Theory and Automatic Control, University of Stuttgart, 70569 Stuttgart, Germany (e-mail: \{\href{mailto:henning.schlueter@ist.uni-stuttgart.de}{schlueter},\href{mailto:frank.allgower@ist.uni-stuttgart.de}{allgower}\}@\href{https://www.ist.uni-stuttgart.de/}{ist.uni-stuttgart.de}).}}

\maketitle

\begin{abstract}
  We present a Stochastic Model Predictive Control (SMPC) framework for linear systems subject to Gaussian disturbances. 
  In order to avoid feasibility issues, we employ a recent initialization strategy, optimizing over an interpolation of the initial state between the current measurement and previous prediction.
  By also considering the variance in the interpolation, we can employ variable-size tubes, to ensure constraint satisfaction in closed-loop.
  We show that this novel method improves control performance and enables following the constraint closer, then previous methods.
  Using a DC-DC converter as numerical example we illustrated the improvement over previous methods.
\end{abstract}


\section{Introduction}
Many problems in control require achieving good performance, while guaranteeing safety constraints in the face of model uncertainty.
A prominent method to address these requirements is model predictive control (MPC) \cite{Rakovic2019mpchandbook}.
This optimization-based control strategy allows to directly consider constraint and performance \cite{Rawlings2017book}.
These methods can be extended to handle various types of uncertainty \cite{Kouvaritakis2016book}.
In particular, we focus on stochastic MPC, which employs probabilistic information to reduce conservatism, by allowing small probability of constraint violation \cite{Mesbah2016smpcSurvey}.
In this work, we consider an initialization strategy for SMPC, loosely inspired by \cite{Mayne2005rmpcIC} and building upon ideas from \cite{ist:schluter2022a}.

\subsubsection*{Related work}
There are two primary philosophies when it comes to handle uncertainty in MPC. 

On the one hand there are approaches following Chisci~et.\@\,al.\@\ \cite{Chisci2001rmpc}, where the MPC problem is initialized using the measurement directly.
This then requires to consider the new starting point for recursive feasibility, which is generally challenging in the stochastic setting.
While for disturbance with bounded support, such approaches for SMPC \cite{Cannon2011stochstictube,Lorenzen2017guaranteedFeasibility} exist.
These methods are often conservative, due to resorting to discarding parts of the stochastic information for feasibility.

In SMPC for unbounded disturbances, two line of thought are followed to achieve recursive feasibility despite the unboundness.
One is to employ a backup strategy, when initializing with the measurement is infeasible. For example, in \cite{Farina2013fallback,Hewing2018prs}, the last prediction is used for which feasibility can be ensured.
This is taken to the extreme in \cite{Hewing2020indirectSMPC}, where the prediction for the constraint is always initialized with the previous predication.
Feedback can then be achieved by using the actual measurement for the cost function.
Both of these ideas lack feedback on the constraint, which maybe undesirable.

On the other hand there are the schemes based on Mayne~et.\@\,al.\@\ \cite{Mayne2005rmpcIC}.
Here, the predication is initialized from some set containing the measurement and the last predication with the optimizer choosing the optimal starting point.
While this make recursive feasibility a non-issue, ensuring closed-loop constraint satisfaction is the challenge.
SMPC approaches with guarantees following this philosophy are a quite recent development.

First steps in this direction in \cite{ist:schluter2022a,Koehler2022interpolSMPC} initialize the predication by interpolating between the measurement and the last predication. 
Thus, mixing ideas from approaches such as \cite{Farina2013fallback,Hewing2018prs,Hewing2020indirectSMPC}.
The concept have since also applied in other settings, such as data-driven MPC \cite{Mark2023dddrsmpcic} and output-feedback SMPC \cite{Pan2022ddOutputSmpcIC}.
While restrictive only optimizing over a line and employing a fixed-size tube for constraint tightening, the result are quite promising.
We expand upon these ideas by also considering the variance, which enables us to employ variable-sized tubes.
The ability to adapt the tube size allows to closely approach the constraint yielding significant performance gains.

Another important aspect in SMPC is the choice of constraint tightening and propagation method for the stochastic uncertainty \cite{Mesbah2016smpcSurvey}.
In the literature there several solutions available. 
The propagation can be, for example, achieved via randomized methods using scenarios, analytical reformulations, concentration inequalities, or polynomial chaos expansion.
Since we have a known distribution for the disturbance, an exact analytical reformulation is available.

\subsubsection*{Contribution}
In this article, we improve upon the initialization strategy employed by state-of-the-art SMPC framework, primarily \cite{ist:schluter2022a}, but also \cite{Farina2013fallback,Hewing2018prs,Koehler2022interpolSMPC}.
We study linear systems with additive Gaussian disturbances subject to individual chance constraints on state and input.
We relax the fixed-size tubes use in \cite{ist:schluter2022a}.
Considering the error variance under the initialization yields variable-sized tubes.
We show that the resulting scheme still satisfies the chance constraint in closed-loop, while achieving a superior performance.
Beyond that in a numerical study we find evidence that the assumptions for the reconditioning of the chance constraints can be relaxed beyond the proven limits.
This indicates that the interpolation improves the chance constraint interpretation, even if the measurement is never chosen exactly as is required by the theory.

\subsubsection*{Notation} 
We use $x_{i}(k)$ to denote $x$ predicted $i$ time steps ahead from time $k$ and $x(k)$ for quantities realized in close-loop.
The probability of $A$, the expected value of $x$ and the variance of $x$ are written as $\Prob(A)$, $\Expectation[x]$, and $\Variance[x]$, respectively.
The weighted 2-norm is denoted as $\xVert x\xVert^{2}_Q = x^{\smash{\T}}\!Qx$ for positive-definite matrices $Q\succ{}0$ and a sequence as $x_{0:N}\coloneqq{}\{x_0, \dots, x_N\}$.
$\dlyap(A,Q)\coloneqq{}P$ is the positive-definite solution of the Lyapunov equation $APA^{\smash{\T}}+Q=P$. 
The columnwise vectorization of a matrix $A$ is denoted by $\vectorize\{A\}$.

\section{Problem Setup} 

Consider the linear discrete-time system
\begin{equation}\label{eq:sys}
  x(t+1) = A x(t) + B u(t) + w(t)
\end{equation}
with additive Gaussian noise $w(t)\sim{}\mathcal{N}{}(0,\mathbb{V}\mkern-2mu{}w)$, state $x(t)\in{}\mathbb{R}{}^{n_x}$, and input $u(t)\in{}\mathbb{R}{}^{n_u}$.

The goal is to design a state feedback MPC minimizing expected LQR cost subject to individual chance constraints
\vspace*{-\baselineskip}
\begin{subequations}\label{eq:cc}
\begin{alignat}{5}
  \Prob(c_{i}^{\smash{\T}}x_{k}(t) \leq{} d_{i}) &\geq{} \varrho{}_{i}  \qquad&&\forall{}\;\;0&&\leq{}i&&<n_{cx}\\
  \Prob(c_{j}^{\smash{\T}}u_{k}(t) \leq{} d_{j}) &\geq{} \varrho{}_{j}  \qquad&&\forall{}n_{cx}&&\leq{}i&&<n_{cx}+n_{cu}
\end{alignat}
\end{subequations}
on state and input with $c_{i}\in{}\mathbb{R}{}^{n_{x}}$, $c_{j}\in{}\mathbb{R}{}^{n_{u}}$, and $d_{i},d_{j}\in{}\mathbb{R}{}$.
The conditioning of these probabilities is specified later in \cref{sec:constraint tightening}.
The guaranteed satisfaction probability $\varrho{}\in{}(0,1)$ allows for some acceptable risk of constraint violation.
Due to the unbounded support of the Gaussian noise, hard constraints cannot be enforced.

\begin{remark}
  While beyond the scope of this work, the algorithm can be adapted to handle joint chance constraint using one of the inequalities in \cite{Hochberg1987multiplecomparison}, \eg Boole's inequality. Thereby the joint constraint can be split into individual constraints, while assigning their violation risk based on these inequalities online.
\end{remark}

As is common in the literature \cite{Mesbah2016smpcSurvey}, we employ a disturbance-affine feedback parameterization. To this end the system is prestabilized by the matching LQR controller $K$.
This allows us to split the dynamics into a nominal and an error part, which is purely driven by the disturbance.
Thus, choosing $x=z+e$, we obtain
\begin{align*}
  z_{k+1}&=Az_{k}+Bv_{k}\\
  e_{k+1}&=(A+BK)e_{k}+w_{k}
\end{align*}
with the input $u=v+Ke$. 
Thereby, the nominal part is a deterministic process, while the error is a Gaussian random process, since only it is driven by the Gaussian noise.
Thus, we can compute the variance of the error
\begin{align*}
  \mathbb{V}\mkern-2mu{}e_{k+1}&=(A+BK)\mathbb{V}\mkern-2mu{}e_{k}(A+BK)^{\smash{\T}}+\mathbb{V}\mkern-2mu{}w\,,
\end{align*}
which yields in absence of the constraints the steady-state variance $\mathbb{V}\mkern-2mu{}e_\infty{}\coloneqq{}\dlyap(A+BK,\mathbb{V}\mkern-2mu{}w)$.

\section{Stochastic Model Predictive Control} 

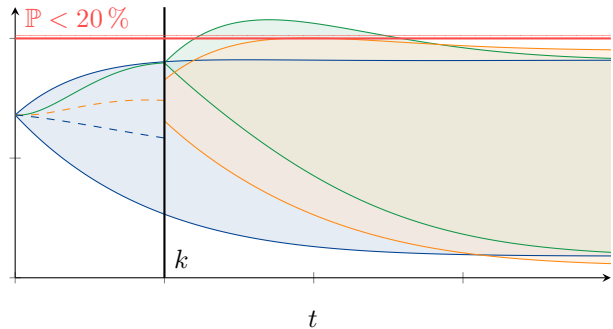
\begin{figure}
  \centering
  \bgroup
    \definecolor{istblue}{rgb/cmyk}{0.0,0.25490,0.56862745/1,0.55,0.0,0.43}
    \definecolor{istgreen}{rgb/cmyk}{0.388235,0.83137,0.4431372/0.53,0,0.46,0.16}
    \definecolor{istorange}{rgb/cmyk}{1.0,0.5333,0.0667/0.0,0.46,0.93,0.0}
    \definecolor{istred}{rgb/cmyk}{0.9961,0.2902,0.2863/0.0,0.7,0.71,0.0}
    \definecolor{istlightblue}{rgb/cmyk}{0.3765, 0.6863, 1.0/0.62,0.31,0.0}
    \definecolor{istdarkblue}{rgb/cmyk}{0.1176,0.1804,0.8706/0.86,0.79,0.0,0.12}
    \definecolor{istdarkgreen}{rgb/cmyk}{0.0627,0.5882,0.2824/0.89,0.0,0.52,0.41}
    \definecolor{istdarkred}{rgb/cmyk}{0.529,0.031,0.075/0,0.94,0.86,0.47}
    \begin{tikzpicture}
        \begin{axis}[{
            declare function={
                tube(\x)=0.91*(\x>0)*(1-exp(-1.5*(\x)));
                trajA(\x)=0.4*(1+sin(45*(1.4*\x)))*exp(-1.4*(\x));
                trajB(\x)=0.7*(\x>0)*(1+sin(45*(1.4*\x-2)))*exp(-(1.4*\x-2));
                trajC(\x)=0.5*trajB(\x);
                scaleC=1.1;
                offC=0.86;
            },
            axis lines=left,
            grid=none,
            samples=400,
            xmin=0, xmax=3.99, ymin=-1.11, ymax=1.4,
            xtick distance=1,
            ytick={-1.11,0,1.11},
            height=0.6\linewidth,width=1.1\linewidth,
            ylabel={},yticklabels={},
            xlabel={$t$},xticklabels={},
        }]
            \addplot[name path=Au, smooth, domain=0:4, istblue] {trajA(x)+tube(x)};
            \addplot[name path=Al, smooth, domain=0:4, istblue] {trajA(x)-tube(x)};
            \addplot[istblue!20, opacity=0.5] fill between [of=Au and Al];

            \addplot[name path=Bm, smooth, domain=0:1, istdarkgreen] {trajA(x)+trajB(x)};

            \addplot[name path=Bu, smooth, domain=1:4, istdarkgreen] {trajA(x)+trajB(x)+tube(x-1)};
            \addplot[name path=Bl, smooth, domain=1:4, istdarkgreen] {trajA(x)+trajB(x)-tube(x-1)};
            \addplot[istdarkgreen!20, opacity=0.5] fill between [of=Bu and Bl];

            \addplot[name path=Cm, smooth, domain=0:1, istorange, dashed] {trajA(x)+trajC(x)};
            \addplot[name path=Am, smooth, domain=0:1, istblue, dashed] {trajA(x)};
            \addplot[name path=Cu, smooth, domain=1:4, istorange] {trajA(x)+trajC(x)+scaleC*tube(x-offC)};
            \addplot[name path=Cl, smooth, domain=1:4, istorange] {trajA(x)+trajC(x)-scaleC*tube(x-offC)};
            \addplot[istorange!20, opacity=0.5] fill between [of=Cu and Cl];
            \draw[black,thick] (1,\pgfkeysvalueof{/pgfplots/ymin}) -- node[pos=0.0,anchor=south west] {$k$} (1,\pgfkeysvalueof{/pgfplots/ymax});
            \begin{scope}[pattern color=istred,istred]
                \pattern[pattern={Lines[angle=45,distance={2pt/sqrt(2)},line width=0.25pt]}] (\pgfkeysvalueof{/pgfplots/xmin},1.14) rectangle (\pgfkeysvalueof{/pgfplots/xmax},1.11);
                \draw[thick] (\pgfkeysvalueof{/pgfplots/xmin},1.11) -- node[pos=0.0,anchor=south west] {$\mathbb{P}<20\,\%$} (\pgfkeysvalueof{/pgfplots/xmax},1.11);
            \end{scope}
        \end{axis}
    \end{tikzpicture}
\egroup
  \caption{
    Illustrates how conditioning the probability distribution on different information affect the interpretation of a chance constraint.
    Shown in blue the distribution without taking measurement information into account, \ie conditioned on the initial time.
    In green, the distribution conditioned on the current measurement at time $k$ is shown, which due to the disturbance up to $k$ violates the constraint conditioned on time $k$.
    Lastly, in orange we show our proposed solution considering the interpolation between these extreme. 
    The shown confidence bounds of the distribution match the chance constraint, \ie should lie below the constraint indicated in red.
  }
  \label{fig:reconditioning}
\end{figure}

One of the central problems in formulating the SMPC problem is achieving recursive feasibility in spite of the chance constraints permitting violations up to a probability. 
In particular, this problem arises when taking new measurement into account as illustrated in \cref{fig:reconditioning}.
The predicted distribution of the future system state is based on all possible future trajectories for all possible disturbance realizations. 
By taking the measurement into account, we select a subset of these trajectories.
Thereby the percentage of trajectories that violate the constraint may increase beyond the proscribed limit, as illustrated by the confidence bounds in the figure.

To overcome this obstacle, there are three established method in the literature.
While, for bounded disturbances, one often conservatively relies on methods from robust MPC to ensure recursive feasibility, for unbounded disturbances, commonly the fact that the problem become infeasible with a non-zero probability is permitted. 
Then, either the recursive feasibility becomes probabilistic or a fallback strategy \cite{Farina2013fallback,Hewing2018prs} is devised for when the original problem is inevitably infeasible.
Lastly, some approach even simply argue against taking the measurement into account for the chance constraint \cite{Hewing2020indirectSMPC}.

Recently, we proposed a novel approach \cite{ist:schluter2022a} to overcome this issue by interpolating between the measurement and the last predicted nominal state, thus initializing with
\begin{equation}\label{eq:interpolation}
  z_{0}(t) = (1-\xi{}(t)) x(t) + \xi{}(t) z_{1}(t{-}1)\,.
\end{equation}
Thus far, that approach was not able to take the error distribution into account, thus the tightening of the chance constraint has to rely conservatively on some fixed upper bound on the distribution.
Hence, that tightening is the same regardless of how close to the true measurement the prediction was initialized.

The work rectify this limitation by interpolating the entire distribution, as visualized in orange in \cref{fig:reconditioning}.
For simplicity, we assume the measurement is exact, \ie has zero variance.
Then, since the state distribution is Gaussian due to the Gaussian noise, we only need to consider the expected value and variance for the interpolation.
Thus, for the variance we obtain
\begin{equation}\label{eq:interpolation:var}
  \mathbb{V}\mkern-2mu{}e_{0}(t) = \Expectation[(x(t)-z_{0}(t))^{2}] = \xi{}^{2}(t) \mathbb{V}\mkern-2mu{}e_{1}(t{-}1)\,,
\end{equation}
while the expected value is the nominal state, hence, the same as \eqref{eq:interpolation}.
This type of can be interpreted as conditioning on a prior artificial point in time.
In another interpretation we compare this to an observer such a Kalman filter, where the \emph{a posteriori} state estimate is derived from a mixture of the predicted and measured state. 
While this interpolation is vastly more simplistic, it is based in the same idea that the measurement is not necessary is the correct or best state to work with.
In particular, we are not required to use the true or expected state to initialize the optimal control problem for the SMPC scheme.
As long as we can show that the scheme ensure the chance-constraint in closed-loop, the interpretation for the open-loop prediction is secondary.
We will show that initializing with this interpolation, yields is meaningful for the closed-loop constraint via \cref{thm:cc:closedloop}, and yields a meaningful cost function.

In the remainder of this section, we derive a deterministic surrogate for the SMPC problem by constructing a constraint tightening, a cost function, and terminal ingredients.

\subsection{Constraint Tightening} \label{sec:constraint tightening}
In order to tighten the chance constraint, we first assume that they are conditioned such that $\mathbb{E}{}e=0$.
Then, employing that the noise is Gaussian by assumption, the state chance constraints can be restated as
\begin{gather}
  \varrho{}_{i}\leq{}\Prob(c_{i}^{\smash{\T}}x\leq{}d_{i})=\Prob(c_{i}^{\smash{\T}}e\leq{}d-c_{i}^{\smash{\T}}z)=\Phi{}\!\left(\frac{d_{i}-c_{i}^{\smash{\T}}z}{\sqrt{c^{\smash{\T}}_{\smash{\raisebox{-.3ex}{$\scriptstyle i$}}} \mathbb{V}\mkern-2mu{}e c_{i}\,}}\right)\notag\\
  \mathllap{\Longleftrightarrow{}\quad}c_{i}^{\smash{\T}}z\leq{}d_{i}-\sqrt{c^{\smash{\T}}_{\smash{\raisebox{-.3ex}{$\scriptstyle i$}}} \mathbb{V}\mkern-2mu{}e c_{i}\,} \Phi{}^{-1}(\varrho{}_{i}) \label{eq:constraint:tightened:state}
\end{gather}
with the cumulative density function of the standard normal distribution $\Phi{}(x)$.
Similarly, one can derive
\begin{equation}
  c_{j}^{\smash{\T}}v\leq{}d_{j}-\sqrt{c^{\smash{\T}}_{\smash{\raisebox{-.3ex}{$\scriptstyle\! j$}}} K\mathbb{V}\mkern-2mu{}e K^{\smash{\T}}c_{\smash{\! j}}\,} \Phi{}^{-1}(\varrho{}_{j}) \label{eq:constraint:tightened:input}
\end{equation}
as deterministic replacement for input chance constraints.
\begin{remark}
This constraint tightening is exact, but just as the chance constraints themselves the resultant constraint is not convex, unless $\varrho{}=0.5$, in which case the problem would reduce to expectation constraints.
Thus, one uses either a nonlinear solver or employs a convex approximation of the tightened constraint.
Two common methods to approximate these constraints \cite{Lucia2015pceconvexification} are either a first-order Taylor approximation or split the constraint into one on the variance and one on the nominal value by fixing the value of the square root.
Either approximation is update based on the last solution. 
\end{remark}

\subsection{Cost Function}\label{sec:smpc:cost}
Since the objective is to minimize the expected LQR cost over the infinite horizon, we first have to ensure that the cost remains finite despite the additive noise in the system. Therefore, we use the cost function
\begin{equation}\label{eq:cost:original}
  J(t)=\mathbb{E}{}_{t}\!\left[\,\sum_{k=0}^\infty{}\ns{}\Bigl(\ns{} \Vert{}x_{k}\Vert{}^{2}_Q + \Vert{}u_{k}\Vert{}^{2}_R - \ell{}_{ss}\ns{}\Bigr)\right]
\end{equation}
as in \cite[\abvSec6.2]{Kouvaritakis2016book} with the expected steady-state step-cost $\ell{}_{ss}\coloneqq{}\trace[\p(Q{+}K^{\smash{\T}}RK)\mathbb{V}\mkern-2mu{}e_\infty{}]$ in absence of the constraints, where $\mathbb{E}{}_{t}$ is the exception conditioned on the measurement at the current time $t$.
With this conditioning we do not have that the expected error $\mathbb{E}{}_{t}e$ is zero, thus reformulating with the decomposition $x=z+e$ yields
\begin{align*}
  &=\!\! \sum_{k=0}^\infty{}\ns{}\Bigl(\ns{} \Vert{}z_{k}\Vert{}^{2}_Q + \Vert{}v_{k}\Vert{}^{2}_R + \trace[\p(Q{+}K^{\smash{\T}}RK)\mathbb{E}{}_{t}\left[ee^{\smash{\T}}\right]] - \ell{}_{ss} \ns{}\Bigr)\\
  &=\!\! \sum_{k=0}^{N-1}\ns{}\Bigl(\ns{} \Vert{}z_{k}\Vert{}^{2}_Q {+} \Vert{}v_{k}\Vert{}^{2}_R {+} \trace[\p(Q{+}K^{\smash{\T}}RK)\mathbb{V}\mkern-2mu{}e_{k}]\ns{}\Bigr)\ns{} +\!\! \underbrace{\sum_{k=0}^\infty{} \Vert{}\mathbb{E}{}_{t}e_{k}\Vert{}^{2}_{Q{+}K^{\smash{\T}}RK}}_{\smash{\eqqcolon{}\,\beta{}_{t}(\cdot{})}}\notag\\
  &\mkern50mu+ \overbrace{\Vert{}z_N\Vert{}^{2}_P +\!\! \sum_{k=N}^\infty{} \ns{}\Bigl(\ns{} \trace[\p(Q{+}K^{\smash{\T}}RK)\mathbb{V}\mkern-2mu{}e_{k}] - \ell{}_{ss}\ns{}\Bigr) - N\ell{}_{ss}}^{\smash{\eqqcolon{}\,\digamma{}(\cdot{})\,+\,\text{const.}}}
\end{align*}
with $P\coloneqq{}\dlyap(A+BK,Q{+}K^{\smash{\T}}RK)$.
The initial cost term $\beta{}$ and terminal cost term $\digamma{}$ are discussed in the following.

\subsubsection{Initial Cost \texorpdfstring{$\mathit{\beta{}_{t}(\xi{}(t)\ns{}\ns{})}$}{}} 
The expected value of the initial error $\mathbb{E}{}_{t}e_{0}(t)$ is not zero due to the conditioning on the current time step, but has to be derived from the initialization via $\mathbb{E}{}_{t}e_{0}(t)=x(t)-z_{0}(t)$. 
Thus, we obtain
\begin{equation*}
\beta{}_{t}(\xi{}(t)\ns{}\ns{}) = \Vert{}\mathbb{E}{}_{t}e_{0}(t)\Vert{}_P = \Vert{}x(t)-z_{1}(t{-}1)\Vert{}^{2}_P \xi{}^{2}(t)
\end{equation*}
which only depends on the interpolation variable $\xi{}$. 

\subsubsection{Terminal Cost \texorpdfstring{$\mathit{\digamma{}(z_N(t),\Sigma{}_N(t)\ns{}\ns{})}$}{}} 
In order to obtain a traceable formulation for the terminal cost $\digamma{}(\cdot{})$, we have to address the infinite sum inside
\begin{align*}
  \digamma{}(\cdot{}) &= \Vert{}z_N\Vert{}^{2}_P - \trace[\p(Q{+}K^{\smash{\T}}RK) \!\!\vphantom{\sum_{k=N}^\infty{}}\smash{\underbrace{\sum_{k=N}^\infty{} \Bigl( \mathbb{V}\mkern-2mu{}e_\infty{} - \mathbb{V}\mkern-2mu{}e_{k}\Bigr)}_{\eqqcolon{}\,\mathcal{S}{}_N}}] + \text{const.}\\[-0.5\baselineskip]
\end{align*}
Using that $\mathbb{V}\mkern-2mu{}w=\mathbb{V}\mkern-2mu{}e_\infty{}-(A+BK)\mathbb{V}\mkern-2mu{}e_\infty{}(A+BK)^{\smash{\T}}$ per definition of $\mathbb{V}\mkern-2mu{}e_\infty{}$, we can then expand the sum 
\begin{align*}
  \mathcal{S}{}_N &= \mathbb{V}\mkern-2mu{}e_\infty{}{-}\mathbb{V}\mkern-2mu{}e_N + \!\!\sum_{\crampedclap{k=N+1}}^\infty{} \bigl( \mathbb{V}\mkern-2mu{}e_\infty{}\! - (A{+}BK) \mathbb{V}\mkern-2mu{}e_{k-1} (A{+}BK)^{\smash{\T}} - \mathbb{V}\mkern-2mu{}w\bigr)\\
  &= \mathbb{V}\mkern-2mu{}e_\infty{}{-}\mathbb{V}\mkern-2mu{}e_N + (A{+}BK) \mathcal{S}{}_N (A{+}BK)^{\smash{\T}}\\
  &= \dlyap\bigl(A+BK, \mathbb{V}\mkern-2mu{}e_\infty{}\bigr) - \dlyap\bigl(A+BK, \mathbb{V}\mkern-2mu{}e_N(t)\bigr)\,.
\end{align*}
Defining $\Sigma{}_N\coloneqq{}\dlyap\bigl(A+BK,\mathbb{V}\mkern-2mu{}e_N(t)\bigr)$, then allows to rewrite $\digamma{}(\cdot{})$ in terms of the optimization variable as
\begin{align*}
  \digamma{}(z_N(t),\Sigma{}_N(t)\ns{}\ns{}) &\coloneqq{} \Vert{}z_N(t)\Vert{}^{2}_P + \trace[\p(Q+K^{\smash{\T}}RK) \Sigma{}_N(t)]
\end{align*}
while dropping the terms that constant in the optimization.
The term $\Sigma{}_N(t)$ is either compute by adding the constraint
\begin{equation*}
  \vectorize\{\mathbb{V}\mkern-2mu{}e_N(t)\} = (\mathbb{I}{}_{n_{x}^{2}} - (A+BK) \otimes{} (A+BK))\vectorize\{\Sigma{}_N(t)\}
\end{equation*}
to compute the required solution of Lyapunov via the optimization.
Alternatively, we can use forward- and reverse-mode derivatives of the Lyapunov equation \cite{Kao2020diffLyapunov} to efficiently use the $\dlyap$-function in the optimization directly.

Thus, we obtain the final cost function for the MPC scheme \eqref{eq:smpc:cost}, which is positive, convex, and differs only by a constant from the original expected LQR cost.

\subsection{Terminal Constraints}
Lastly, in order to ensure recursive feasibility, we employ terminal constraints with the prestabilizing controller as terminal controller.
For the nominal prediction $z_N$ the usual requirements for a terminal set $\mathcal{Z}{}_F$ are sufficient.
That is the terminal set is invariant under the controller, \ie $(A + BK)\mathcal{Z}{}_F \subseteq{} \mathcal{Z}{}_F$ and satisfies the tightened constraints \cref{eq:constraint:tightened:state,eq:constraint:tightened:state,eq:constraint:tightened:input}, such a set can be computed via \cite[Alg.\,2.1]{Kouvaritakis2016book}.

For the error variance we require that $\mathbb{V}\mkern-2mu{}e_N\preccurlyeq{}\mathbb{V}\mkern-2mu{}e_\infty{}$.
This, however, follow from the initialization as $\mathbb{V}\mkern-2mu{}e_{0}(t) \preccurlyeq{} \mathbb{V}\mkern-2mu{}e_{1}(t-1) \preccurlyeq{} \mathbb{V}\mkern-2mu{}e_\infty{}$ and $\mathbb{V}\mkern-2mu{}e_{0}(0) \preccurlyeq{} \mathbb{V}\mkern-2mu{}e_\infty{}$, since $A+BK$ is Hurwitz and $K$ is not optimized. 
Thus, we do not require a terminal constraint for the variance.

\subsection{SMPC Scheme}
Bringing the ideas from the previous sections together, we obtain the deterministic surrogate optimal control problem%
\begin{samepage}%
\begin{subequations}\label{eq:smpc}\bgroup%
  \begin{alignat}{4}
    \min_{\mathclap{\crampedsubstack{v_{k}(t), z_{k}(t),\\ \mathbb{V}\mkern-2mu{}e_{k}(t), \xi{}(t)}}}&\quad &{\quad                               
                            \sum_{k=0}^{N-1} \Vert{}z_{k}(t)\Vert{}^{2}_Q + \Vert{}v_{k}(t)\Vert{}^{2}_R + \trace[(Q+K^{\smash{\T}}RK)\mathbb{V}\mkern-2mu{}e_{k}(t)]                       
    } \span\notag                                                                                                \\[-0.85\baselineskip]
    &&{\qquad                     + \beta{}_{t}(\xi{}(t)) + \digamma{}\bigl(z_N(t),\Sigma{}_N(t)\bigr)}\span             \label{eq:smpc:cost} \\[0.25\baselineskip]
    \mathrm{s.t.} &&          z_{k+1}(t) &= A z_{k}(t) + B v_{k}(t)\,,                                                    \\
    \span\span               \mathbb{V}\mkern-2mu{}e_{k+1}(t) &= (A+BK) \mathbb{V}\mkern-2mu{}e_{k}(t) (A+BK)^{\smash{\T}} + \mathbb{V}\mkern-2mu{}w\,,                                           \\
    \span\span                  z_{0}(t) &= (1-\xi{}(t)) x(t) + \xi{}(t) z_{1}(t{-}1)\,,                                      \\
    \span\span                 \mathbb{V}\mkern-2mu{}e_{0}(t) &= \xi{}^{2}(t) \mathbb{V}\mkern-2mu{}e_{1}(t{-}1)\,,                                \label{eq:smpc:ic:V}\\
    \span\span               c_{i}^{\smash{\T}}z_{k}(t) &\leq{} d_{i} - \sqrt{c_{i}^{\smash{\T}}\mathbb{V}\mkern-2mu{}e_{k}(t)\,c_{i}}\Phi{}^{-1}(\varrho{}_{i}) \,,                      \label{eq:smpc:cc:x} \\
    \span\span               c_{j}^{\smash{\T}}v_{k}(t) &\leq{} d_{j} - \sqrt{c_{j}^{\smash{\T}}K\,\mathbb{V}\mkern-2mu{}e_{k}(t)\,K^{\smash{\T}}c_{j}}\Phi{}^{-1}(\varrho{}_{j}) \,,                 \label{eq:smpc:cc:u} \\
    \span\span                 z_N(t) &\in{} \mathcal{Z}{}_F\,,                                                                  \\
    \span\span  \vectorize\{\mathbb{V}\mkern-2mu{}e_N(t)\} &= (\mathbb{I}{} - (A{+}BK) \otimes{} (A{+}BK))\vectorize\{\Sigma{}_N(t)\}                           \\
    \span\span                   \xi{}(t) &\in{} \left[0,1\right]\,,                                                     \\
    \span\span                     \forall{}k &\in{} \left\{0,\ldots{},N-1\right\}                                           \notag \\
    \span\span                \forall{}i,j \in{}\mathbb{N}{} &: 0\leq{} i < n_{\mathrm{cx}} \leq{} j < n_{\mathrm{cu}} + n_{\mathrm{cx}}      \notag 
  \end{alignat}\egroup%
\end{subequations}%
\end{samepage}%
with input sequence $v_{0:N-1}$, nominal state prediction $z_{0:N}$, error variance $\mathbb{V}\mkern-2mu{}e_{0:N}$, and interpolation variable $\xi{}$ as optimization variables. This problem is solved at each time step yielding the applied input $u(t)=v_{0}(t)+K(x(t)-z_{0}(t))$.

\section{Theoretical Guarantees} 

The foundation of the theoretical results developed in this section is that \cref{eq:smpc} is recursively feasible, which is immediate by construction.

\begin{theorem}\label{thm:recfea}
  If the SMPC scheme \cref{eq:smpc} is feasible at $t=0$ with $z_{1}(-1)=x(0)$ and $\mathbb{V}\mkern-2mu{}e_{1}(t{-}1)\preccurlyeq{}\mathbb{V}\mkern-2mu{}e_\infty{}$, then it is recursively feasible.
\end{theorem}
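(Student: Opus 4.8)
The plan is to prove recursive feasibility by induction on $t$, exhibiting at each step an explicit feasible candidate for \cref{eq:smpc} at time $t+1$ obtained from the optimizer at time $t$ via the usual shift-and-append construction. The crucial point, and the reason the paper can call this ``immediate by construction'', is that the interpolation variable may be chosen as $\xi(t+1)=1\in[0,1]$: this collapses the initialization onto the previous prediction and thus turns the shifted trajectory into an admissible initialization. Alongside feasibility I would carry the auxiliary invariant $\mathbb{V}\mkern-2mu{}e_{1}(t-1)\preccurlyeq\mathbb{V}\mkern-2mu{}e_\infty{}$, which holds at $t=0$ by hypothesis (together with $z_{1}(-1)=x(0)$, which makes $z_{0}(0)=x(0)$ irrespective of $\xi(0)$) and which I re-establish at every step. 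The base case $t=0$ is feasible by assumption.

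For the inductive step, suppose \cref{eq:smpc} is feasible at time $t$. First I would define the candidate by shifting, setting $z_{k}(t+1)=z_{k+1}(t)$, $v_{k}(t+1)=v_{k+1}(t)$, and $\mathbb{V}\mkern-2mu{}e_{k}(t+1)=\mathbb{V}\mkern-2mu{}e_{k+1}(t)$ on the overlapping indices, and appending a terminal step driven by the prestabilizing controller, i.e. $v_{N-1}(t+1)=Kz_N(t)$ so that $z_N(t+1)=(A+BK)z_N(t)$ and $\mathbb{V}\mkern-2mu{}e_N(t+1)=(A+BK)\mathbb{V}\mkern-2mu{}e_N(t)(A+BK)^{\T}+\mathbb{V}\mkern-2mu{}w$. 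With $\xi(t+1)=1$ the initialization constraints \eqref{eq:interpolation} and \eqref{eq:smpc:ic:V} reduce to $z_{0}(t+1)=z_{1}(t)$ and $\mathbb{V}\mkern-2mu{}e_{0}(t+1)=\mathbb{V}\mkern-2mu{}e_{1}(t)$, which is exactly the first element of the shifted sequence, so the candidate is consistent and the nominal and variance dynamics hold by construction.

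Next I would verify the remaining constraints. The nominal terminal constraint $z_N(t+1)\in\mathcal{Z}{}_F$ follows from invariance $(A+BK)\mathcal{Z}{}_F\subseteq\mathcal{Z}{}_F$ and $z_N(t)\in\mathcal{Z}{}_F$. The tightened chance constraints \cref{eq:smpc:cc:x,eq:smpc:cc:u} for indices $k=0,\dots,N-2$ are inherited verbatim from indices $1,\dots,N-1$ at time $t$. The only constraint not inherited is the one at $k=N-1$, where $z_{N-1}(t+1)=z_N(t)$ and $\mathbb{V}\mkern-2mu{}e_{N-1}(t+1)=\mathbb{V}\mkern-2mu{}e_N(t)$. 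Here I would use that $z_N(t)\in\mathcal{Z}{}_F$ already satisfies the tightened constraints evaluated at the steady-state variance $\mathbb{V}\mkern-2mu{}e_\infty{}$, and combine this with $\mathbb{V}\mkern-2mu{}e_N(t)\preccurlyeq\mathbb{V}\mkern-2mu{}e_\infty{}$ and $\Phi{}^{-1}(\varrho_i)\geq0$ (the relevant regime $\varrho_i\geq\tfrac12$), so that the tightening term $\sqrt{c_i^{\T}\mathbb{V}\mkern-2mu{}e c_i}\,\Phi{}^{-1}(\varrho_i)$ only shrinks when the variance decreases and the inequality remains valid; the input constraint is handled identically through $K\mathbb{V}\mkern-2mu{}e_N(t)K^{\T}\preccurlyeq K\mathbb{V}\mkern-2mu{}e_\infty{}K^{\T}$.

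Finally I would close the induction by re-establishing the carried invariant: from the interpolation $\mathbb{V}\mkern-2mu{}e_{0}(t)=\xi^{2}(t)\mathbb{V}\mkern-2mu{}e_{1}(t-1)\preccurlyeq\mathbb{V}\mkern-2mu{}e_{1}(t-1)\preccurlyeq\mathbb{V}\mkern-2mu{}e_\infty{}$ and one step of the monotone Lyapunov recursion $\mathbb{V}\mkern-2mu{}e_{k+1}=(A+BK)\mathbb{V}\mkern-2mu{}e_{k}(A+BK)^{\T}+\mathbb{V}\mkern-2mu{}w$, using $(A+BK)\mathbb{V}\mkern-2mu{}e_\infty{}(A+BK)^{\T}+\mathbb{V}\mkern-2mu{}w=\mathbb{V}\mkern-2mu{}e_\infty{}$, I obtain $\mathbb{V}\mkern-2mu{}e_{1}(t)\preccurlyeq\mathbb{V}\mkern-2mu{}e_\infty{}$ and, iterating, $\mathbb{V}\mkern-2mu{}e_{k}(t)\preccurlyeq\mathbb{V}\mkern-2mu{}e_\infty{}$ for all $k$; this supplies precisely the $\mathbb{V}\mkern-2mu{}e_N(t)\preccurlyeq\mathbb{V}\mkern-2mu{}e_\infty{}$ needed above and matches the hypothesis for the next step. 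I expect the main obstacle to be this single terminal-index chance constraint: unlike in deterministic MPC, the tightening depends on the predicted variance, so feasibility of the appended step does not follow from the terminal set alone but genuinely requires the variance ordering $\mathbb{V}\mkern-2mu{}e_N(t)\preccurlyeq\mathbb{V}\mkern-2mu{}e_\infty{}$ together with the sign of $\Phi{}^{-1}(\varrho_i)$; everything else is routine shift-invariance bookkeeping.
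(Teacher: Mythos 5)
Your proposal is correct and follows essentially the same argument as the paper: the shifted previous solution with $\xi(t+1)=1$, feasibility of the appended step via invariance of $\mathcal{Z}_F$ and its satisfaction of the tightened constraints, together with the variance ordering $\mathbb{V}\mkern-2mu{}e_{k}(t)\preccurlyeq\mathbb{V}\mkern-2mu{}e_\infty$ (which the paper establishes in its terminal-constraints discussion rather than inside the proof). Your explicit treatment of the $k=N-1$ constraint, including the monotonicity of the tightening in the variance and the sign of $\Phi^{-1}(\varrho_i)$, merely spells out what the paper's one-line proof leaves implicit.
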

\begin{proof}
  The shifted previous solution $v_{k}(t)=v_{k+1}(t{-}1)$ is feasible with $\xi{}(t)=1$, since $\mathcal{Z}{}_F$ is invariant and satisfies the tightened constraints.
\end{proof}

While other initialization for $z_{1}(-1)$ might equally yield recursive feasibility possibly even with a larger feasible region, starting with $z_{0}(0)=x(0)$ or $\xi{}(0)=0$, which this choice ensures, is desirable. 
This ensures that there is a point in time to fall back on for the conditioning of the chance constraints on the closed-loop, see \cref{thm:cc:closedloop}.

\subsection{Chance Constraint Satisfaction in Closed-Loop}
Since the constraints tightening assume $\mathbb{E}{}e=0$, \ie $\xi{}=0$, recursive feasibility alone is insufficient to ensure that the chance constraint \cref{eq:cc} are satisfied. 
Thus, additional steps must be taken to ensure that in closed-loop constraints still hold despite not deliberately not always choosing $\mathbb{E}{}e=0$.

In order to address this, we consider the error $e(t)$ for time $t$, where this not necessarily holds true. 
Since the constraint tightening is based upon the error dynamics, this will enable us to consider satisfaction of the chance constraints in closed-loop.
Hence, considering an arbitrary halfspace constraint we obtain the following lemma.

\begin{lemma}\label{lem:cc:closedloop}
  For system \cref{eq:sys} controlled by \cref{eq:smpc} with tightening \cref{eq:constraint:tightened:state,eq:constraint:tightened:state,eq:constraint:tightened:input}, we have for any $r\in{}\mathbb{R}{}$ and any $c\in{}\mathbb{R}{}^{n_{x}}$ that
  \begin{equation*}
    \Prob(\left\vert{}c^{\smash{\T}}e(t)\right\vert{} \leq{} r) \geq{} \Prob(\left\vert{}c^{\smash{\T}}e_{t-t_{0}}(t_{0})\right\vert{} \leq{} r)
  \end{equation*}
  conditioned on $e(t_{0}) = 0$ holds for all $t \geq{} t_{0}$.
\end{lemma}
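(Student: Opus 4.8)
The plan is to prove the stronger claim that the closed-loop error $e(t)$ is \emph{more peaked} than the free prediction error $e_{t-t_{0}}(t_{0})$, in the sense that $\Prob(e(t)\in C)\geq\Prob(e_{t-t_{0}}(t_{0})\in C)$ for every convex set $C$ that is symmetric about the origin. The stated inequality is then the special case in which $C$ is the symmetric slab $\{v:\vert c^{\T}v\vert\leq r\}$, so nothing beyond this ordering is needed.

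First I would make the closed-loop error dynamics explicit. With $\tilde{e}(s)\coloneqq x(s)-z_{1}(s{-}1)$, the reinitialization \eqref{eq:interpolation} gives the realized error $e_{0}(s)=\xi(s)\tilde{e}(s)$, and inserting $u=v_{0}+Ke_{0}$ into \eqref{eq:sys} yields the recursion $\tilde{e}(s{+}1)=(A{+}BK)\xi(s)\tilde{e}(s)+w(s)$. The structural facts I would rely on are that $\xi(s)\in[0,1]$ and $\tilde{e}(s)$ are both determined by $w(t_{0}),\dots,w(s{-}1)$, while $w(s)$ is independent of these, and that fixing $\xi\equiv 1$ recovers the undamped recursion whose solution started from $e(t_{0})=0$ is exactly $e_{t-t_{0}}(t_{0})$.

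I would then establish the peakedness ordering by induction on $s$, with hypothesis $H(s)$: $\Prob(\tilde{e}(s)\in C)\geq\Prob(e_{s-t_{0}}(t_{0})\in C)$ for all symmetric convex $C$. The base case $s=t_{0}$ is immediate since both errors vanish. For the inductive step I would condition on $w(t_{0}),\dots,w(s{-}1)$: conditionally, $\tilde{e}(s{+}1)$ is Gaussian with mean $(A{+}BK)\xi(s)\tilde{e}(s)$ and covariance $\mathbb{V}w$, so that $\Prob(\tilde{e}(s{+}1)\in C)=\Expectation[\phi_{C}((A{+}BK)\xi(s)\tilde{e}(s))]$, where $\phi_{C}(m)\coloneqq\Prob(m+w\in C)$ with $w\sim\mathcal{N}(0,\mathbb{V}w)$. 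Two ingredients close the step. Since $\xi(s)\in[0,1]$ scales its argument toward the origin, Anderson's theorem gives the almost-sure bound $\phi_{C}((A{+}BK)\xi(s)\tilde{e}(s))\geq\phi_{C}((A{+}BK)\tilde{e}(s))$, which removes $\xi(s)$ after taking expectations. Moreover, as a convolution of the log-concave indicator $\mathbf{1}_{C}$ with a Gaussian density, $\phi_{C}$ is log-concave and symmetric, so $v\mapsto\phi_{C}((A{+}BK)v)$ has symmetric convex super-level sets; expanding this expectation by the layer-cake formula and applying $H(s)$ to each super-level set yields $\Expectation[\phi_{C}((A{+}BK)\tilde{e}(s))]\geq\Expectation[\phi_{C}((A{+}BK)e_{s-t_{0}}(t_{0}))]=\Prob(e_{s+1-t_{0}}(t_{0})\in C)$, which is $H(s{+}1)$.

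The main obstacle is precisely the state-dependent weight $\xi(s)$: being chosen by the optimizer from the realized state, it is correlated with the disturbance history, so the closed-loop error is genuinely non-Gaussian and a naive ``smaller variance plus Gaussianity'' argument does not apply. The point of the construction above is that $\xi(s)\in[0,1]$ is measurable with respect to the past disturbances while $w(s)$ is fresh, which lets Anderson's theorem absorb $\xi(s)$ through a pointwise inequality valid irrespective of its correlation with $\tilde{e}(s)$, rather than through any independence assumption. Finally, specializing $H(t)$ to the slab $C=\{v:\vert c^{\T}v\vert\leq r\}$ gives the lemma, and the same convexity argument extends it from $\tilde{e}(t)$ to the post-interpolation error $e_{0}(t)=\xi(t)\tilde{e}(t)$ should $e(t)$ denote the latter.
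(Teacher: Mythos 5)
Your proposal is correct and takes a genuinely different route from the paper's proof. The paper argues entirely inside the Gaussian variance calculus: it asserts the error is normally distributed around zero for all time, uses monotonicity of the Gaussian CDF to convert the one-step probability comparison into the variance inequality $c^{\top}\mathbb{V}e_{n}(t{-}n)c\leq c^{\top}\mathbb{V}e_{n+1}(t{-}n{-}1)c$, proves that inequality by unrolling the error recursion and bounding $\xi^{2}\leq 1$, and then chains this comparison recursively from reconditioning time $t$ back to $t_{0}$. You instead prove the stronger peakedness ordering over all symmetric convex sets by forward induction, absorbing the interpolation variable through Anderson's theorem applied conditionally on the disturbance history, and pushing the induction hypothesis through the linear map and the fresh noise via Pr\'{e}kopa log-concavity and the layer-cake decomposition. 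What your route buys is rigor exactly where the paper is loose: since $\xi(s)$ is an optimizer output and hence correlated with the past disturbances, the unconditional closed-loop error is a Gaussian \emph{mixture} rather than Gaussian, so the paper's ``Gaussian CDF plus smaller variance'' step is, strictly speaking, valid only conditionally on the realized $\xi$-sequence --- a subtlety you identify explicitly and circumvent, needing Gaussianity only of the fresh disturbance $w(s)$ (indeed any symmetric log-concave noise would do). What the paper's route buys is brevity, and that it works with the quantities $\mathbb{V}e_{k}$ already propagated by the scheme.

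One small repair is needed in your induction: the base case is off by one. Under the conditioning $e(t_{0})=0$ it is the post-interpolation error $\xi(t_{0})\tilde{e}(t_{0})$ that vanishes, not $\tilde{e}(t_{0})=x(t_{0})-z_{1}(t_{0}{-}1)$, so $H(t_{0})$ as stated can fail (it would force $\tilde{e}(t_{0})$ to lie in every symmetric convex set with probability one). Start instead at $s=t_{0}{+}1$: conditionally on the history, $\tilde{e}(t_{0}{+}1)=(A{+}BK)\xi(t_{0})\tilde{e}(t_{0})+w(t_{0})=w(t_{0})$, which has exactly the law of $e_{1}(t_{0})$, so $H(t_{0}{+}1)$ holds with equality and the rest of your argument, including the final passage from $\tilde{e}(t)$ to $e(t)=\xi(t)\tilde{e}(t)$ (which only uses that a symmetric convex set is star-shaped about the origin), goes through unchanged.
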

\begin{proof}
  The predicted error $e_{i}(t)$ depends on the closed-loop disturbances $w(t_{0}),\ldots{},w(t-1)$ and the future disturbances $w_{0}(t),\ldots{},w_{i-1}(t)$ via prediction.
  By collecting the future disturbances in $\tilde{e}_{n}(t) = \sum_{i=0}^{n-1} A_{k}^{n-i-1}w_{i}(t)$, the error becomes $e_{n}(t)=A_{k}^{n} e(t) + \tilde{e}_{n}(t)$.

  Now, in a first step, we show that 
  \begin{equation}
    \Prob(c^{\smash{\T}}e_{n}(t-n)\leq{}r) \srel{!}{\geq{}} \Prob(c^{\smash{\T}}e_{n+1}(t-n-1)) \label{eq:proof::lem:cc:closedloop::step}
  \end{equation}
  holds. Since $e$ is for all time normally distributed around zero, we can use the CDF $\Phi{}$ for the standard normal distribution
  to translate this inequality
  \begin{equation*}
    \Phi{}\!\left(\frac{r}{c^{\smash{\T}}\mathbb{V}\mkern-2mu{}e_{n}(t-n)c}\right) \srel{!}{\geq{}} \Phi{}\!\left(\frac{r}{c^{\smash{\T}}\mathbb{V}\mkern-2mu{}e_{n+1}(t-n-1)c}\right)
  \end{equation*}
  into an inequality on the variance
  \begin{equation*}
    c^{\smash{\T}}\mathbb{V}\mkern-2mu{}e_{n}c \srel{!}{\leq{}} c^{\smash{\T}}\mathbb{V}\mkern-2mu{}e_{n+1}(t-n-1)c
  \end{equation*}
  by exploiting the monotonicity of $\Phi{}$.
  
  Then, by substituting the error we obtain
  \begin{align*}
    &\! c^{\smash{\T}}\mathbb{V}\mkern-2mu{}e_{n}c 
     = c^{\smash{\T}}\mathopen{}\left(A_{k}^{n} \mathbb{V}\mkern-2mu{}e_{0}(t-n) A_{k}^{\smash{\T}}{}^{n} + \mathbb{V}\mkern-2mu{}\tilde{e}_{n}(t-n)\ns{}\right)\mathclose{}c\\
    &\!= c^{\smash{\T}}\ns{}\ns{}\mathopen{}\left(\ns{}\xi{}^{2}\ns{}\ns{}A_{k}^{n+1} \mathbb{V}\mkern-2mu{}e_{0}\ns{}(\ns{}t{\text{\textendash}}n{\text{\textendash}}1\ns{}) A_{k}^{\smash{\T}}{}^{n+1} \ns{}{+} \xi{}^{2}\ns{}\ns{}A_{k}^{n} \mathbb{V}\mkern-2mu{}w(\ns{}t{\text{\textendash}}n{\text{\textendash}}1\ns{}) A_{k}^{\smash{\T}}{}^{n} \ns{}{+} \mathbb{V}\mkern-2mu{}\tilde{e}_{n}\ns{}(\ns{}t{\text{\textendash}}n\ns{})\ns{}\ns{}\right)\mathclose{}c\\
    &\!\overset{\mathclap{\xi{}\nearrow{}1}}{\leq{}} c^{\smash{\T}}\mathopen{}\left(\ns{}A_{k}^{n+1} \mathbb{V}\mkern-2mu{}e_{0}(t{\text{\textendash}}n{\text{\textendash}}1) A_{k}^{\smash{\T}}{}^{n+1} + A_{k}^{n} \mathbb{V}\mkern-2mu{}w(t{\text{\textendash}}n{\text{\textendash}}1) A_{k}^{\smash{\T}}{}^{n} + \mathbb{V}\mkern-2mu{}\tilde{e}_{n}(t{\text{\textendash}}n)\ns{}\ns{}\right)\mathclose{}c\\
    &\!=c^{\smash{\T}}\mathbb{V}\mkern-2mu{}e_{n+1}(t-n-1)c\,,
  \end{align*}
  since variances are nonnegative and $A_{k}^{n} \mathbb{V}\mkern-2mu{}w(t{\text{\textendash}}n{\text{\textendash}}1) A_{k}^{\smash{\T}}{}^{n}$ is just an additional for the sum in $\tilde{e}$.
  
  The statement follows from applying \cref{eq:proof::lem:cc:closedloop::step} recursively.
\end{proof}
With the recursive feasibility from \cref{thm:recfea}, we can now obtain a qualified statement on chance constraints in closed-loop.

\begin{theorem}\label{thm:cc:closedloop}
  The chance constraints \cref{eq:cc} conditioned on any $x(t_{0})$, where $\xi{}(t_{0}) = 0$, are satisfied in the closed-loop system.
\end{theorem}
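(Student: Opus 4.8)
The plan is to bound, for each individual halfspace constraint, the closed-loop probability of violation by $1-\varrho_{i}$; I would carry out the state case in detail, the input case being analogous with \cref{eq:smpc:cc:u} and \cref{eq:constraint:tightened:input} in place of \cref{eq:smpc:cc:x} and \cref{eq:constraint:tightened:state}. Fix $t\geq{}t_{0}$ and decompose the realized state as $x(t)=z_{0}(t)+e(t)$. By recursive feasibility (\cref{thm:recfea}) the solver returns a feasible point at every time, so the tightened constraint \cref{eq:smpc:cc:x} holds pointwise for $k=0$ along the actual trajectory, i.e. $c_{i}^{\smash{\T}}z_{0}(t)\leq{}d_{i}-\sqrt{c_{i}^{\smash{\T}}\mathbb{V}\mkern-2mu{}e_{0}(t)c_{i}}\,\Phi{}^{-1}(\varrho{}_{i})$. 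Consequently the violation event is contained in $\{\,c_{i}^{\smash{\T}}e(t)>\sqrt{c_{i}^{\smash{\T}}\mathbb{V}\mkern-2mu{}e_{0}(t)c_{i}}\,\Phi{}^{-1}(\varrho{}_{i})\,\}$, and it suffices to show that this event has probability at most $1-\varrho{}_{i}$ conditioned on $x(t_{0})$.

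First I would eliminate the outermost interpolation weight. Using the initialization together with \cref{eq:smpc:ic:V}, both $e(t)=e_{0}(t)=\xi{}(t)e_{1}(t{-}1)$ and $\mathbb{V}\mkern-2mu{}e_{0}(t)=\xi{}^{2}(t)\mathbb{V}\mkern-2mu{}e_{1}(t{-}1)$ carry the same factor $\xi{}(t)$. Hence on $\{\xi{}(t)>0\}$ this factor cancels and the event reduces to a standardized-tail event governed purely by the error distribution, while on $\{\xi{}(t)=0\}$ the event is empty because the threshold is nonnegative. This cancellation is what makes the online, reconditioned variance interact correctly with the realized closed-loop error.

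The core of the argument is then \cref{lem:cc:closedloop}: it dominates the tail of the closed-loop error $c_{i}^{\smash{\T}}e(t)$ by that of the open-loop prediction $c_{i}^{\smash{\T}}e_{t-t_{0}}(t_{0})$ issued at $t_{0}$, through the variance chain $c_{i}^{\smash{\T}}\mathbb{V}\mkern-2mu{}e_{0}(t)c_{i}\leq{}\cdots{}\leq{}c_{i}^{\smash{\T}}\mathbb{V}\mkern-2mu{}e_{t-t_{0}}(t_{0})c_{i}$ and the monotonicity of $\Phi{}$. Since $\xi{}(t_{0})=0$ forces $e(t_{0})=0$, the expected-error assumption underlying the tightening derivation \cref{eq:constraint:tightened:state} is met exactly for that prediction, so its standardized tail equals precisely $1-\varrho{}_{i}$. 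Passing from the two-sided bound of \cref{lem:cc:closedloop} to the one-sided tail via the symmetry of the zero-mean error, and chaining the event inclusion, the cancellation of $\xi{}(t)$, and the domination, I would obtain $\Prob(c_{i}^{\smash{\T}}x(t)>d_{i}\mid{}x(t_{0}))\leq{}1-\varrho{}_{i}$, which is the claim.

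The step I expect to be the main obstacle is reconciling the reconditioned variance $\mathbb{V}\mkern-2mu{}e_{0}(t)$ that the controller uses online with the true conditional variance of the closed-loop error given $x(t_{0})$: the two are generally not equal, and the interpolation weights $\xi{}(\cdot{})$ are themselves random and correlated with the realized errors. The cancellation of $\xi{}(t)$ together with the worst-case ($\xi{}\nearrow{}1$) variance domination encapsulated in \cref{lem:cc:closedloop} is exactly what bridges this gap, and writing the conditioning carefully so that the \emph{random} threshold $\sqrt{c_{i}^{\smash{\T}}\mathbb{V}\mkern-2mu{}e_{0}(t)c_{i}}\,\Phi{}^{-1}(\varrho{}_{i})$ does not invalidate the monotonicity argument is where the delicacy lies.
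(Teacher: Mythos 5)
Your proposal follows the paper's own route: recursive feasibility (\cref{thm:recfea}) plus the tightened constraint \cref{eq:smpc:cc:x} to reduce closed-loop violation to a tail event of the error, then \cref{lem:cc:closedloop} to dominate that tail by the open-loop prediction from $t_{0}$ (where $\xi{}(t_{0})=0$ forces $e(t_{0})=0$, so the Gaussian quantile is exactly calibrated), and finally symmetry to pass from the lemma's two-sided statement to the one-sided tail. So this is essentially the same proof, built on the same key lemma.

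The one step where you genuinely diverge is worth recording. The paper's chain asserts, deterministically from feasibility, that $d_{i} \geq{} c_{i}^{\smash{\T}}z(t) + \sqrt{c_{i}^{\smash{\T}}\mathbb{V}\mkern-2mu{}e_{t-t_{0}}(t_{0})c_{i}}\,\Phi{}^{-1}(\varrho{}_{i})$, i.e., that the realized nominal satisfies the tightening at the \emph{open-loop} variance propagated from $t_{0}$. But the scheme only enforces \cref{eq:smpc:cc:x} at the online variance $\mathbb{V}\mkern-2mu{}e_{0}(t)=\xi{}^{2}(t)\mathbb{V}\mkern-2mu{}e_{1}(t{-}1)\preccurlyeq{}\mathbb{V}\mkern-2mu{}e_{t-t_{0}}(t_{0})$, and since $\Phi{}^{-1}(\varrho{}_{i})\geq{}0$ in the relevant case $\varrho{}_{i}\geq{}1/2$, the tightening at the larger variance is the \emph{stronger} statement, so the paper's first inequality does not follow as written. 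You instead start from what feasibility actually yields --- the smaller, random, online threshold $\sqrt{c_{i}^{\smash{\T}}\mathbb{V}\mkern-2mu{}e_{0}(t)c_{i}}\,\Phi{}^{-1}(\varrho{}_{i})$ --- and compensate by cancelling the common factor $\xi{}(t)$ between $e(t)=\xi{}(t)e_{1}(t{-}1)$ and that threshold, deferring the remaining $\xi$-dependent layers to the lemma's worst-case $\xi{}\nearrow{}1$ induction. That is the more defensible bookkeeping: your event inclusion is valid, and the residual delicacy you flag --- that \cref{lem:cc:closedloop} is stated for a fixed $r$ while your threshold is random and correlated with $e(t)$ --- is exactly the same gap the paper's lemma absorbs by treating $\xi$ as a deterministic worst-case parameter, so neither write-up is more rigorous than the other at that point. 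In short: same approach and same lemma, but your version anchors the feasibility step at the variance the optimization problem actually constrains, which is where the paper's terse chain is loosest.
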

\begin{proof}
  Since, for all $t\geq{}t_{0}$, \cref{lem:cc:closedloop} implies 
  \begin{equation*}
    \prob(c_{i}^{\smash{\T}}e(t) \leq{} c_{i}^{\smash{\T}}\mathbb{V}\mkern-2mu{}e_{t-t_{0}}(t_{0})c_{i}\Phi{}^{-1}(\varrho{}_{i})|t_{0}) \geq{} \varrho{}_{i}\,,
  \end{equation*} 
  we apply \cref{thm:recfea,eq:smpc:cc:x} to show that with probability $\varrho{}_{i}$
  \begin{equation*}
    d_{i} \geq{} c_{i}^{\smash{\T}}z(t) + c_{i}^{\smash{\T}}\mathbb{V}\mkern-2mu{}e_{t-t_{0}}(t_{0})c_{i}\Phi{}^{-1}(\varrho{}_{i}) \geq{} c_{i}^{\smash{\T}}z(t) + c_{i}^{\smash{\T}}e(t) \geq{} c_{i}^{\smash{\T}}x(t)
  \end{equation*} 
  holds. The input constraints follow \emph{mutatis mutandis}.
\end{proof}
Since the initialization from \cref{thm:recfea} yields $\xi{}(0)=0$, at the least we achieve the same guarantees as indirect feedback SMPC approaches \cite{Hewing2020indirectSMPC}.
Beyond that the partial feedback on the chance constraints our method incorporates has a two-fold benefit. 

Firstly, the conditioning can be updated, whenever $\xi{}=0$, \eg in case of a disturbance in a beneficial direction. 
Even partial use of the measurement can prevent that the constraint interpretation becomes mostly independent of the current state, when the state has nominally convergent.

Secondly, our initialization enables a smaller constraint tightening by decreasing the variance of the error using the measurement, at least partially.
This secondary benefit is novel compared to previous approaches using initial state interpolation \cite{ist:schluter2022a,Koehler2022interpolSMPC}.

\subsection{Stability and Cost}
As usual the stability guarantee for our scheme derives from the cost function.
Given the choice of the cost function mirrors the standard LQR cost function for SMPC, stability can be shown as follows.
\begin{theorem}
  The SMPC scheme \cref{eq:smpc} is mean square stable for the feasible region, with
  \begin{equation}
    \lim_{r\rightarrow{}\infty{}}\frac1r\sum_{k=0}^r \mathbb{E}{}_{t}\mathopen{}\left[\Vert{}x(t+k)\Vert{}^{2}_Q + \Vert{}u(t+k)\Vert{}^{2}_R\right] \leq{} \ell{}_{ss}\vspace*{0.3\baselineskip}
  \end{equation}
\end{theorem}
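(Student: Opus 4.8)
The plan is to use the optimal value $V(t)$ of the surrogate problem \cref{eq:smpc} as a stochastic Lyapunov function and to establish the one-step dissipation inequality $\mathbb{E}_t[V(t+1)]\le V(t)-\Vert x(t)\Vert^2_Q-\Vert u(t)\Vert^2_R+\ell_{ss}$, from which the claimed Ces\`aro-mean bound follows by telescoping. Because the objective \eqref{eq:smpc:cost} differs from the expected infinite-horizon LQR cost only by a constant and is positive (\cref{sec:smpc:cost}), $V$ is a legitimate nonnegative candidate.

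First I would invoke recursive feasibility (\cref{thm:recfea}): the shifted previous solution with $\xi(t+1)=1$ is feasible at $t+1$ for every realization of $w(t)$, so $V(t+1)\le J_{\mathrm{c}}(t+1)$, the cost of that candidate, which I would then evaluate termwise. The nominal state and input sum telescopes against the terminal weight through the defining identity $(A+BK)^{\T}P(A+BK)+(Q+K^{\T}RK)=P$, which absorbs the extra terminal stage into $\Vert z_N(t)\Vert^2_P$ and leaves $-\Vert z_0(t)\Vert^2_Q-\Vert v_0(t)\Vert^2_R$. The variance traces telescope analogously: combining linearity of $\dlyap$ with the identity $\dlyap(A{+}BK,(A{+}BK)X(A{+}BK)^{\T})=\dlyap(A{+}BK,X)-X$ gives $\Sigma_N(t+1)-\Sigma_N(t)=\mathbb{V}e_\infty-\mathbb{V}e_N(t)$, so the entire variance contribution collapses to $-\trace[(Q{+}K^{\T}RK)\mathbb{V}e_0(t)]+\ell_{ss}$.

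The remaining term is the initial cost, and this is where the hard part lies. With $\xi(t+1)=1$ the candidate pays $\beta_{t+1}(1)=\Vert x(t+1)-z_1(t)\Vert^2_P$, and the closed-loop dynamics give $x(t+1)-z_1(t)=(A+BK)e_0(t)+w(t)$; taking $\mathbb{E}_t$ and splitting into mean and variance yields $\Vert(A+BK)\mathbb{E}_t e_0(t)\Vert^2_P+\trace[P\,\mathbb{V}e_1(t)]$. Against the old $\beta_t(\xi(t))=\Vert\mathbb{E}_t e_0(t)\Vert^2_P$ the mean part again collapses via the $P$-identity to $-\Vert\mathbb{E}_t e_0(t)\Vert^2_{Q+K^{\T}RK}$. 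The genuine obstacle is that $V$ is built on the fictitious interpolated error distribution (mean $\mathbb{E}_t e_0(t)$, variance $\xi^2(t)\mathbb{V}e_1(t{-}1)$) rather than on the realized error $e(t)=x(t)-z_0(t)$, so I must recover the realized stage cost $\Vert x(t)\Vert^2_Q+\Vert u(t)\Vert^2_R$, including its cross term $2(Qz_0(t)+K^{\T}Rv_0(t))^{\T}e(t)$, from these expected quantities. I would resolve this by conditioning on a reference time $t_0$ with $\xi(t_0)=0$---which \cref{thm:recfea} guarantees and which is the same anchor used in \cref{thm:cc:closedloop}---so that fictitious and realized errors coincide there, and by using the martingale structure of the disturbance to average out the linear cross terms as the bound is propagated forward.

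Finally, I would assemble the one-step inequality, apply it under $\mathbb{E}_t$, and telescope with the tower property to obtain $\sum_{k=0}^{r}\mathbb{E}_t[\Vert x(t+k)\Vert^2_Q+\Vert u(t+k)\Vert^2_R]\le V(t)-\mathbb{E}_t[V(t+r+1)]+(r+1)\ell_{ss}\le V(t)+(r+1)\ell_{ss}$, where the last step drops the nonnegative terminal value. Dividing by $r$ and letting $r\to\infty$ gives the asymptotic-average bound $\ell_{ss}$, i.e.\ mean-square stability on the feasible region. I expect the reconciliation of the fictitious interpolated cost with the realized closed-loop stage cost in the $\beta$ and variance bookkeeping to be the main difficulty; the shifted-candidate telescoping itself is standard.
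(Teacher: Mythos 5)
Your proposal takes essentially the same route as the paper's proof: the optimal value of \cref{eq:smpc} as a stochastic Lyapunov function, the shifted candidate from \cref{thm:recfea} with $\xi{}=1$, a one-step decrease inequality in which $\ell{}_{ss}$ is left over from the time-shift of the terminal cost, and a telescoping/Ces\`aro argument conditioned on the initial state; your explicit bookkeeping (the Lyapunov identity for $P$, linearity of $\dlyap$ giving $\Sigma{}_N(t+1)-\Sigma{}_N(t)=\mathbb{V}\mkern-2mu{}e_\infty{}-\mathbb{V}\mkern-2mu{}e_N(t)$, and $\mathbb{E}{}_{t}[\beta{}_{t+1}(1)]=\Vert{}(A+BK)e_{0}(t)\Vert{}^{2}_{P}+\trace[P\mathbb{V}\mkern-2mu{}w]$) is precisely what the paper compresses into ``following standard arguments, by optimality'' with a pointer to \cite[\abvThm7.1]{Kouvaritakis2016book}. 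The reconciliation issue you single out as the main difficulty---recovering the realized stage cost $\Vert{}x(t)\Vert{}^{2}_Q+\Vert{}u(t)\Vert{}^{2}_R$, cross terms included, from a value function built on the fictitious interpolated error distribution---is genuine, but note that the paper's own proof does not resolve it either (it is subsumed in the same ``standard arguments''), so on this point your attempt and the published argument share the same level of rigor rather than your attempt falling short of it.
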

\begin{proof}
  The cost function \cref{eq:smpc:cost} is differs from the  standard expected quadratic cost \cref{eq:cost:original} only by a constant. 
  Thus, considering  \cref{eq:cost:original} and following standard arguments, by optimality, we have
  \begin{align*}
      \mathbb{E}{}_t[J^*(t+1)] &\leq{} \mathbb{E}{}_t[\bar J(t+1)]\\
      &\leq{} J^*(t)-(\xVert x(t)\xVert^{2}_Q+\xVert u(t)\xVert^{2}_R-\ell{}_{ss})\,,
  \end{align*}
  where $J^*$ denotes the cost of the optimal solution and $\bar J$ denotes the cost of the candidate from \cref{thm:recfea}. 
  The term $\ell{}_{ss}$ is leftover from the time-shift of the terminal cost.
  Since $J^*$ is finite, taking the expectation conditioned on $x_{0}$ yields the result, for details see \cite[\abvThm7.1]{Kouvaritakis2016book}.
\end{proof}

Unlike in previous methods \cite{Koehler2022interpolSMPC} the interpolation variable $\xi{}$ is already penalized with the normal cost function, since it affects both the variance and the initial error. 
Thus, it is not required to add an artificial cost for $\xi{}$, to numerical reason or otherwise.
Since, in fact, we already arrived at the optimal penalty for allowing an initial deviation between the true system state and nominal state. 

\section{Numerical simulation} 
\begin{figure}\bgroup
  \centering
  \input{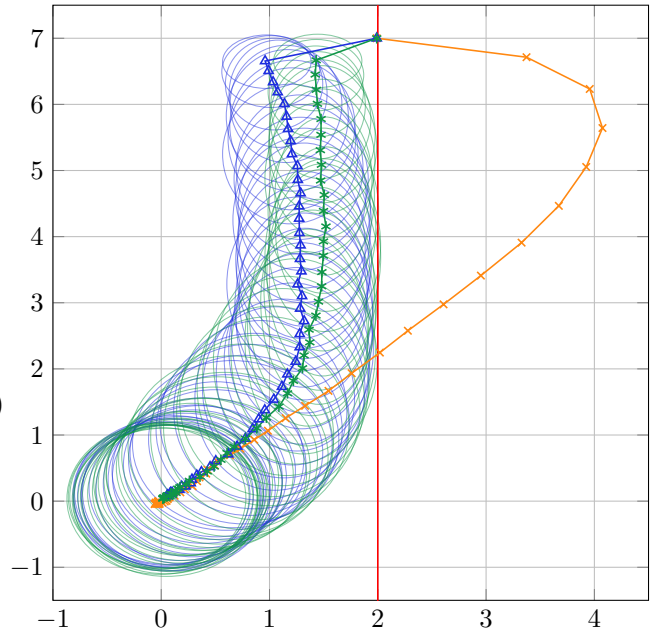}
  \caption{Plot of averaged closed-loop response of iVSMPC in green (\ref{fig:numsim:phaseplot:ivsmpc}) compared to icSMPC (\ref{fig:numsim:phaseplot:icsmpc}) with ellipses indicating an estimate of the 90\%-confidence region. For reference the mean response of LQR (\ref{fig:numsim:phaseplot:lqr}) is shown.}
  \label{fig:numsim:phaseplot}
\egroup\end{figure}
For the numerical study we use the DC-DC-converter regulation problem, which is a widely used benchmark case study in the SMPC literature \cite{Cannon2011stochstictube,Lorenzen2017guaranteedFeasibility} and for this type of schemes in particular \cite{ist:schluter2022a,Ramadan2023outputSMPCic}. 

The corresponding linear dynamics of the form \cref{eq:sys} with
\begin{equation}\label{eq:numsim:example:dcdc}
    A = \begin{bmatrix}1 & 0.0075\\-0.143 & 0.996\end{bmatrix}, \quad
    B = \begin{bmatrix}4.798 \\ 0.115\end{bmatrix}\,,
\end{equation} 
using the increased variance $\Sigma{}_w=0.1\identity_2$ from \cite{ist:schluter2022a,Ramadan2023outputSMPCic}.
The weights for the cost are $Q = \operatorname{diag}\begin{bmatrix}1,&10\end{bmatrix}$ and $R=10$ with a prediction horizon $N=8$.
 
To study the constraint satisfaction we consider a single chance constraint on the first state
\begin{equation}
    \Prob(x^1(k)\leq{}2) \geq{} 0.9
\end{equation}
for the system with an initial state $x(0)=\begin{bmatrix}1.99, & 7\end{bmatrix}^\top{}$ at the border of the initially feasible region, \cf \cref{thm:recfea}.

As direct comparison we consider the initial state interpolation approach to SMPC as in \cite{ist:schluter2022a}, which we will denote as icSMPC in contrast to our iVSMPC.
That approach differs in that the error variance is fixed to $\mathbb{V}\mkern-2mu{}e_\infty{}$, thus the initialization \cref{eq:smpc:ic:V} is missing.
Further the chance constraint tightening differs, relying on Chebyshev's inequality applied to central convex unimodal distribution instead of the CDF of the Gaussian distribution.

Both scheme were implemented with CasADi \cite{Andersson2019casadi} such that the difference is restricted to the outlined points.
Due to the higher dimensionality caused by the need to calculate the variance, our iVSMPC is on average 30\% slower than icSMPC.
This is to be expected, and compensated for by better control performance.
It may be possible to improve the computation time by precomputing the variance trajectories and then only applying a scaling online.

\subsection{Constraint satisfaction in closed-loop} 
In order to illustrated constraint satisfaction, we rely on the dynamics, which violate the constraint under unconstrained optimal control. 
Thereby, we can show that the scheme follows the constraint border to stay as close as possible to the optimal trajectory.
Therefore, the chance constraint is active for a considerable amount of steps, before converging.

In \cref{fig:numsim:phaseplot}, we contrast iVSMPC with the icSMPC and LQR trajectories as obtained from a Monte Carlo simulation over 1\,100\,000 different disturbance realizations.
First of all, we observe that the trajectories under the SMPC schemes smoothly slide along the tightened constraint.
The difference of cause being that the variable-sized tube of iVSMPC allow a closer approach to the constraint.

Further, by explicitly considering the variance in iVSMPC, the scheme can actively reduce the close-loop variance as necessary, which allows to further approach the constraint. 

Lastly, unlike icSMPC the constraint tightening of the iVSMPC is exact, thus the confidence bounds actually touch the constraint.
Where the conservative Chebyshev tightening of icSMPC, hinders a closer tracking of the constraint.

\begin{figure}
  \begin{tikzpicture}
    \begin{axis}[
        grid=major,
        xmin=0, xmax=3, ymin=0, ymax=1,
        xtick distance=0.5, x dir=reverse,
        ytick distance=0.1,
        height=0.5\linewidth, width=0.9\linewidth,
        colormap={RdYlBu_r}{rgb(0.0)=(0.19215686274509805,0.21176470588235294,0.58431372549019611) rgb(0.1)=(0.27058823529411763,0.45882352941176469,0.70588235294117652) rgb(0.2)=(0.45490196078431372,0.67843137254901964,0.81960784313725488) rgb(0.3)=(0.6705882352941176,0.85098039215686272,0.9137254901960784) rgb(0.4)=(0.8784313725490196,0.95294117647058818,0.97254901960784312) rgb(0.5)=(1.0,1.0,0.74901960784313726) rgb(0.6)=(0.99607843137254903,0.8784313725490196,0.56470588235294117) rgb(0.7)=(0.99215686274509807,0.68235294117647061,0.38039215686274508) rgb(0.8)=(0.95686274509803926,0.42745098039215684,0.2627450980392157) rgb(0.9)=(0.84313725490196079,0.18823529411764706,0.15294117647058825) rgb(1.0)=(0.6470588235294118,0.0,0.14901960784313725)},
        colorbar,
        ]
        \fill[red,opacity=0.1] (axis cs:3,1) rectangle (axis cs:2,0.1);
        \addplot[point meta min=0,point meta max=1] graphics[xmin=0,ymin=0,xmax=3,ymax=1] {./img/ecdf-graph};
    \end{axis}
\end{tikzpicture}
  \caption{Empirical distribution functions of $\prob(x(t+1)|\xi{}(t)<\bar \xi{},t<25)$ for different $\bar \xi{}\in{}[0,1]$, while the chance constraint is active, \ie within the first 25 steps.
  The red region indicates the chance constraint.}
  \label{fig:numsim:ecdf}
\end{figure}
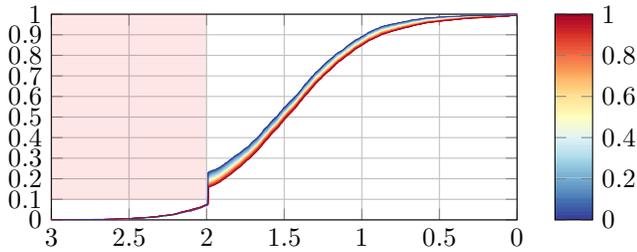
\Cref{fig:numsim:ecdf} shows the empirical distribution function (eCDF) for the state at the next time-step conditioned on $\xi{}$ smaller than some level $\bar \xi{}$ and that the chance constraint is active.
As seen in \cref{fig:numsim:phaseplot} the constraint is active for the first 25 time-steps.
This figure indicates not only that the chance constraint is satisfied for $\xi{}=0$ (top-most eCDF), as guaranteed by \cref{thm:cc:closedloop}, but additional we observe in this example that the chance constraint can also be reconditioned with a $\xi{}>0$.
This would allow for a cleaner interpretation of the chance constraint in close-loop, since we can usually recondition on the current measurement.
Thus, the delay and, thereby, the uncertainty increase over time is less relevant.

\subsection{Closed-loop performance} 
Not only are we interested in tighter constraint satisfaction, but iVSMPC also promises an improved performance as a consequence thereof.
This can also be observed in the Monte-Carlo simulation.
Of course, the LQR cost cannot be achieved under the constraint, yet since the optimal constraint cost is unknown, it serves as a good reference point.
Thus, normalizing the average cost with respect to the LQR cost, we have that the cost for the iVSMPC  is $178\%$, whereas for the icSMPC $207\%$ of the LQR cost.
This is the immediate benefit of risking a trajectory closer to the chance constraint. 

\subsection{Interpolating variable} 
\normalcolor
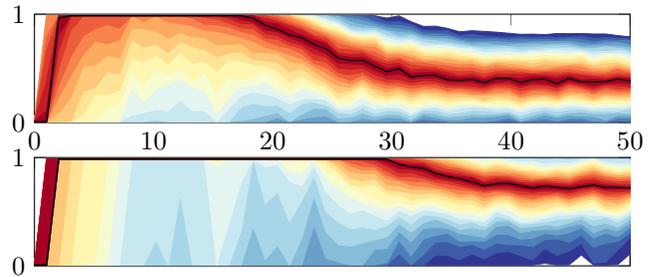
\begin{figure}
  \bgroup
  \begin{tikzpicture}
    \begin{axis}[
        name=ivsmpc,
        grid=none,
        xmin=0, xmax=50, ymin=0, ymax=1,
        xtick distance=10, 
        ytick distance=1,
        height=0.35\linewidth, width=1.1\linewidth,
    ]
        \addplot graphics[xmin=0, xmax=50, ymin=0, ymax=1] {./img/xi-ivsmpc};
    \end{axis}
    \begin{axis}[
        name=icsmpc,
        at={(ivsmpc.below south west)},anchor={north west},
        grid=none,
        xmin=0, xmax=50, ymin=0, ymax=1,
        xtick distance=10, 
        ytick distance=1,
        height=0.35\linewidth, width=1.1\linewidth,
        xticklabels={},
    ]
        \addplot graphics[xmin=0, xmax=50, ymin=0, ymax=1] {./img/xi-icsmpc};
    \end{axis}
\end{tikzpicture}
\vspace*{-1.1\baselineskip}
  \egroup
  \caption{The empirical distribution of $\xi{}$ for iVSMPC (top) and icSMPC (bottom) with the median in black. }
  \label{fig:numsim:xi}
\end{figure}
From the derivation of the cost function, \cf \cref{sec:smpc:cost}, we have obtained a scheme that has an explainable cost to the choice of the interpolation variable $\xi{}$.
The older icSMPC has no such insight available.
In \cref{fig:numsim:xi}, we study the impact thereof.
We observe that iVSMPC makes more use of the actual measurement than icSMPC. 
In particular, towards the end, \ie while approaching the steady state, we see the iVSMPC rarely ignores the measurement $\xi{}\to 1$. 
The icSMPC, however, has not clear preference, but tends towards ignoring the measurement.
This obviously reduces nominal cost, since the nominal state converges.
However, this risks higher cost for the true state $x$ as the error increases.

\section{Conclusion} 
We have presented an SMPC scheme for linear system subject to additive Gaussian disturbances, guaranteeing closed-loop constraint satisfaction and stability.
By initializing both the nominal state and the error variance, based on the optimized interpolation variable, we improve upon previous approaches \cite{ist:schluter2022a,Koehler2022interpolSMPC,Pan2022ddOutputSmpcIC,Mark2023dddrsmpcic}, enabling improved control performance with variable-sized tubes.
Using a simple example we illustrate the better performance compared to previous methods, which is mainly due to operation closer to the constraints becoming feasible with the variable-sized tubes adapting to the situation.
Beyond that we highlight numerically the flexibility of the proposed approach, allowing a flexible interpretation of the chance constraint, and using more of the measurement than previous methods.
Future work will include developing a more rigorous understanding of the case where the chance constraint needs not be conditioned on exactly zero error.
Beyond that extensions to more general system classes are worthwhile to explore.

\printbibliography
\end{document}